\theoremstyle{theorem}
\newtheorem{theorem}{Theorem}[section]
\newtheorem{lemma}[theorem]{Lemma}
\newtheorem{corollary}[theorem]{Corollary}
\theoremstyle{definition}
\newtheorem{definition}[theorem]{Definition}
\newtheorem{example}[theorem]{Example}
\theoremstyle{remark}
\newtheorem{remark}[theorem]{Remark}
\title{Cyclic space-filling curves and their clustering property}
\author{Igor V. Netay
\thanks{Joint Stock "Research and production company ``Kryptonite"}
\thanks{Institute for Information Transmission Problems, Russian Academy of Sciences}
}
\begin{document}
\maketitle

    \begin{abstract}
        In this paper we introduce an algorithm of construction
        of cyclic space-filling curves.
        One particular construction provides a family
        of space-filling curves in all dimensions (H-curves).
        They are compared here with the Hilbert curve in the sense of clustering properties,
        and it turns out that the constructed curve is very close and sometimes
        a bit better than the Hilbert curve.
        At the same time, its construction is more simple and evaluation
        is significantly faster.
    \end{abstract}

    \section*{Introduction}
    \label{sec:intro}
    A space-filling curve in dimension~$d$ is a map from~$[0,1]$ into~$\mathbb{R}^d$
such that the image contains an open non-empty set or, equivalently, some
cube~$[a_1,b_1]\times\ldots\times[a_d,b_d]$ for~$a_i<b_i$, $i=1,\ldots,d$.
There are lots of space filling curve constructions,
many of them can be found in~\cite{16Curves}.

\textit{Fractal space-filling curves} are self-similar curves, i.\,e. they exhibit
similar patterns at increasingly small scales.
This similarity is called \textit{unfolding symmetry}.
If the symmetry is a composition of scale and an isometry, the curve is called \textit{affine self-similar}.

Affine self-similar curves play an important role in practice, because they can be easily
constructed and provide a way to construct locality-preserving mapping from multidimensional
data into one-dimensional space.
For instance, the applications include
\begin{itemize}
    \item geo-information systems (GIS, see~\cite{GIS}),
    \item database indices (map multidimensional data to one-dimensional disk address space, see~\cite{908985}),
    \item image compression (clustering of pixels by three-dimensional color, see~\cite{Img}),
    \item parallel processing (see~\cite{ParProc}),
    \item bandwidth reduction of digitally sampled signals (see~\cite{DSP}).
\end{itemize}

There are many sophisticated curve constructions in the literature.
The simplest curve construction which is also called Z-curve was proposed in~\cite{Z}.
Its improvement by usage of Grey coding was proposed in~\cite{Gray}.
Another method based on the Hilbert curve~\cite{Hilbert} was proposed in~\cite{SecKey}.

The first space-filling curve was discovered by G.\,Peano in~1890.
This curve is continuous in terms of Jordans's precise notion of continuity (1887).
In~1891 D.\,Hilbert discovered a general geometric construction procedure for a class of space-filling curves~\cite{Hilbert}.
It has been shown that the Hilbert curve is a continuous, surjective and nowhere differentiable mapping~\cite{ContHilb}.

We follow the procedure mentioned above with some modification making all the constructed
curves cyclic thus leading to a class of continuous space-filling curves~$S^1\to[0,1]^d$.
Constructions in~\cite{Hilbert} are based on ordering,~i.\,e. one subdivides a cube into a grid of
half-size cells being recursively divided up to $2^{nd}$ unit cells
(assuming that the initial cube has size length $2^n$),  where $n$ the \textit{depth} of subdivision.
Then an order of half-size cells is chosen and an order of unit cells is recursively defined,
thus giving  the ordering of all unit cells.
We apply a different approach including construction of oriented cycles and combining them
to a single oriented cycle on each step of the construction.
Resulting class of curves differs from the order-based space-filling curves.
Its advantage is the simpler algorithmic description of continuous curves.
In the case of order-based construction we must apply some reflections (or any of $2^d\cdot d!$ symmetries)
to sub-cells.
In the case of cycle-based construction we can avoid usage of all sub-cell symmetries
and only apply rotations and/or reversals of cycles obtaining the curve which we call \textit{H-curve}.
In terms of evaluation this can be expressed as an linear operation modulo
length of the cycle of the form $x\mapsto \pm x + c$.

Not only fractal curves are used in practice.
For example, the onion curve~\cite{Onion} and spectral curve~\cite{1260840} can give better
results than the Hilbert curve, but have a fixed space granularity.
Fractal curves have the advantage that one can select granularity and
evaluate indexes with different precision for different points.
For example, if we need only to compare the ordering of a pair of points,
we need only to evaluate indexes until the first difference occurred.

Let us describe carefully the notion of \textit{locality-preserving mappings}.
Roughly speaking, we want to construct such a mapping~$\omega$ that the closer the
images~$\omega(x),\omega(y)\in\mathbb{R}^d$ of two points are, the closer the points~$x,y\in\mathbb{R}$ are
and vice versa (like it is described in~\cite{16Curves}).
In other words, one condition is that the mapping~$\omega$ is \textit{open} and other is that $\omega$
is \textit{continuous}.
Unfortunately, there are no curves satisfying both these conditions.

If a curve is a one-to-one correspondence (we need it to be injection,
because we consider only surjective mappings
by the definition of space-filling curves), then it is a \textit{homeomorphism}
(open continuous bijection).
But the interval $[0,1]$ and a cube $[0,1]^d$ for~$d>1$ are not homeomorphic,
because $[0,1]$ without any point except $0$ and $1$ is not connected and $[0,1]^d$
without any point is connected.
Number (more precisely, cardinality) of points~$x\in X$ for a topological space~$X$
such that $X\setminus\{x\}$ is disconnected
is a topological invariant and preserves under homeomorphisms.
The subset of such points in~$[0,1]^d$ is empty for~$d>1$ and is~$(0,1)$ for~$d=1$, so we get a contradiction.

We will see below that the curves we consider are not one-to-one correspondences.
Nevertheless, there are no open continuous mappings from~$[0,1]$ to~$[0,1]^d$ for~$d>1$.

Now suppose that a curve $\omega\colon[0,1]\to[0,1]^d$ is continuous, open and is not injective.
By an equivalent definition of continuous map, it is a map such that the preimage of a closed subset is closed.
Therefore, for any point~$v\in[0,1]^d$ as a closed subset its preimage~$\omega^{-1}(v)\subseteq[0,1]$ is a closed subset of the compact~$[0,1]$,~i.\,e.
is compact and therefore includes its minimum, so~$\omega(\min(\omega^{-1}(v)))=v$.
Define~$S=\{\min(\omega^{-1}(v))\,\mid\,v\in[0,1]^d\}\subseteq [0,1]$.
From non-injectivity it follows that~$S \ne [0,1]$.
Obviously,~$0\in S$.
If~$S=[0,1)$, then we get a contradiction in the same way as before.
Otherwise $p \in (0,1) \setminus S$ exists,~i.\,e. the set~$S$ is not connected.
But the mapping~$\omega\colon S \to [0,1]^d$ is open, continuous and bijective by construction.
Therefore, $S$ and~$[0,1]^d$ are homeomorhic.
At the same time, $[0,1]^d$ is connected and $S$ is not, so we again get a contradiction.

This simple topological reasoning shows that we need to weaken the conditions on locality preserving mappings.
We can obtain continuity, but we need some other way to compare which of curves ``better preserves locality''.
Here we follow a well-known idea from~\cite{908985} (see~\S\ref{sec:clustering_property}) to compare
numbers of ``connected components of preimages of connected figures'' asymptotically for curve construction iterations.
In the same way, we perform a numerical simulation experiment and obtain results (see~\S\ref{subsec:simulation}).

Another interesting property of such maps (like Hilbert curve) is that they are \textit{measure-preserving}, i.\,e.
if~$S\subseteq[0,1]$ has one-dimensional Lebesque measure~$z$, then its image~$\{\omega(s)\,\mid\,s\in S\}$ has
$d$-dimensional Lebesque measure~$z$.
We omit the proof of this property as a simple analysis exercise.

The idea of continuity is a useful heuristic to construct curves with better locality preserving properties.
The main results of this paper are
\begin{itemize}
    \item to introduce the idea of cyclicity~(see~\S\ref{sec:constructions}),
    \item to construct an explicit cyclic curve for any dimension~(see~\S\ref{sec:hcurve}),
    \item to conduct an experiment providing an empirical evidence that the constructed curve is slightly better (or not worse) than the Hilbert curve,
    \item to show that the construction of the these curves is simpler than the construction of most widely used Hilbert curve.
    \item to show by profiling that H-curve can be evaluated essentially faster than the Hilbert curve.
\end{itemize}
So, for a number of applications this new construction method may be preferable to the Hilbert curve.

    \section{Constructions of curves}
    \label{sec:constructions}
    The generic construction process of fractal space-filling curve is usually iterative.
We need to map an interval size of~$1$ to a square size of~$1$ (or a cube of dimension~$d$).
Let us illustrate this for dimension~$d=2$.
On the first step, we divide the square into a grid of~$2\times2$ square cells, while the
interval size of~$1$ is subdivided into four equal sub-intervals
where each sub-interval matches a cell.
We say that the curve traverses the cells in the order given by the order of intervals.
Then we apply the procedure recursively to each sub-interval-cell pair, so that within each cell,
the curve makes a similar traversal up to symmetries of the whole cell.
The symmetries are needed to make each cell's first sub-cell touching the previous cell's last subcell.
This condition after the going to the limit gives us continuity.
Let us present a more detailed algorithm.

Usually the construction of fractal space-filling curves consists of the following steps:
\begin{itemize}
    \item divide a cube of dimension~$d$ into the grid of~$2^d$ half-size cells and
    match them to~$2^d$ equal sub-intervals of interval;
    \item perform the iteration steps: given a matching between $2^{dn}$ cells in the
    cube and~$2^{nd}$ sub-intervals in the interval,
    \begin{itemize}
        \item subdivide each cell into the grid of~$2^{nd}$ sub-cells
        (and maybe apply some symmetry of cell) and match them to sub-intervals
        of the corresponding sub-interval,
        \item join this to the matching between grid of~$2^{d(n+1)}$ cells of the cube
        and~$2^{d(n+1)}$ sub-intervals of the interval.
    \end{itemize}
\end{itemize}

Here we introduce another algorithm of curve design based on cyclicity.
We assume that all the curves are cyclic.
Then on the iteration step we perform some \textit{local mutation} gathering~$2^d$ cycles into one cycle.
The local mutation here means the following:
\begin{itemize}
    \item in each cell we take some subcell and the next one in the cycle,
    \item for these cells we say that the corresponding next subcells are next to the chosen in the next cell,
    \item so, for the next subcell w.\,r.\,t. the chosen one is the subcell chosen in the previous cell.
\end{itemize}
In terms of graphs, we chain~$2^d$ edges into a cycle by~$2^d$ edges and then remove the initial edges.
If we take pairs of sub-cells in such a way that new edges connected by cycle are touching,
then we obtain a continuous surjective map going to the limit:
\[
    S^1 \to [0,1]^d.
\]
See examples of local mutations in~\S\ref{sec:hcurve}.

Usually one needs to perform some transformations during construction process to obtain continuity.
Sometimes it is not necessary as in case of Z-curve.
It is the simplest curve used, but it is not having continuity,
so it has bad locality preserving properties and  is not widely used.
Usually, any continuous curve gives better results,
but it that case the construction procedure needs carefully chosen reflections and/or rotations.

In the proposed construction of cyclic curves on the step of local mutation
in terms of traversal we may need to change the traversal direction and initial point.
Note that the cycle lengths are always degrees of $2$.
So, the change of initial point is simply the addition of cell index to a number
of new initial point modulo a degree of $2$ (addition of $d$-bit numbers),
and change of direction with change of initial point to the previous one (before reversal)
is simply a bitwise complement of d-bit number.

Of course, one may need to use some symmetries depending on the particular
curve construction algorithm.
In the section~\ref{sec:hcurve} we will see that a cyclic continuous curve can be constructed
for any dimension without usage of any symmetries.
An addition with maybe one bitwise complement is computationally cheaper than the evaluation
and application of symmetry.

    \section{H-curve}
    \label{sec:hcurve}
    \subsection{Construction}
\label{subsec:construction}

This section is devoted to the construction of cyclic fractal space-filling curve
for any $d>1$ without using symmetries.
For any dimension $d$, we will traverse half-sized cells in the initial cube in the same way.

Taking a $d$-bit number k as an index in traversal (counting from $0$),
we obtain the corresponding cell coordinate bits as consecutive bits of the number
\[
    g_d(k) := k\oplus \lfloor (k \mod 2^d)/2 \rfloor \mod 2^d
\]
(the symbol~$\oplus$ means bitwise sum, or, xor).
This function permutes the set~$\{0,\ldots,2^d-1\}$.
Therefore, the function~$g_d^{-1}$ is well-defined on the set~$\{0,\ldots,2^d-1\}$.

As we claimed, in the cells we do not apply any reflections or rotations to the cells and sub-cells.
For the curve construction we need only the local mutations.
For explicit computation of correspondence between indexes and cells we need to calculate
the index shifts and find all direction reversals.

\subsection{Local mutation}
\label{subsec:loc-mutation}

For convenience let us assume that grid cells are unit cubes, and the initial big cube has side length~$2^n$.

Actually, for any dimension~$d>1$ we will apply the same local mutation.
This mutation will always act on the central~$4 \times 2 \times \ldots \times 2$-parallelepiped.

\begin{lemma}
    Given~$d>1$, for any~$n\geqslant 2$ the restriction of the graph composed of $2^d$
    half-size cycles in the cube with side length~$2^n$ onto the central~$4\times 2 \times \ldots \times 2$-parallelepiped
    form the same graph, namely, if we denote its vertices
    with $\{0,1,2,3\}\times\{0,1\}^{d-1}$, then the edge set would be
    \[
        (\{0\}\times p, \{1\}\times p)\text{ and } (\{2\}\times p, \{3\}\times p)\text{ for all } p\in\{0,1\}^{d-1}.
    \]
\end{lemma}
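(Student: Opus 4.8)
The statement is really a consistency check: it asserts that no matter which of the $2^d$ half-size cells we are inside, and no matter which scale $n \geq 2$ we look at, the edge pattern induced on the central $4\times 2\times\cdots\times 2$ box is always the two "short" edges $(\{0\}\times p,\{1\}\times p)$ and $(\{2\}\times p,\{3\}\times p)$. The plan is to unwind the recursive definition of the graph — it is the disjoint union of $2^d$ scaled-down copies of the depth-$(n-1)$ graph, one per half-size cell, before the local mutation at scale $n$ is applied — and show that the portion of each such copy meeting the central box depends only on $d$, not on $n$ or on which cell we sit in.

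\textbf{Step 1: reduce to a single cell.} First I would observe that the central $4\times 2\times\cdots\times 2$-parallelepiped of the side-$2^n$ cube is split by the $2^d$ half-size cells into their own central sub-regions: in the first coordinate the band $\{2^{n-1}-2,\dots,2^{n-1}+1\}$ straddles the midpoint $2^{n-1}$, so exactly two half-cells (the ones with $0$ and with $1$ in the first coordinate bit) each contribute a $2\times\cdots$ slab, while in every other coordinate the band $\{2^{n-1}-1,2^{n-1}\}$ again straddles the midpoint and is split one-unit-to-each-side. So the central box of the big cube is tiled by $2^d$ translated copies of the central "corner unit cube neighbourhoods" of the $2^d$ half-cells, and it suffices to understand, for each half-cell, the edges of its depth-$(n-1)$ cycle that lie within the $2\times\cdots\times 2$ corner block adjacent to the centre of the big cube.

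\textbf{Step 2: an induction on $n$ localizing the graph near a corner.} The core of the argument is a lemma-within-the-proof: for the depth-$m$ graph $G_m$ on a side-$2^m$ cube ($m\geq 1$), the set of edges incident to the $2^{d}$ unit cells forming a fixed corner $2\times\cdots\times 2$ block is independent of $m$ and consists exactly of the edges of $G_1$ on that corner block. I would prove this by induction on $m$: $G_m$ is built from $2^d$ copies of $G_{m-1}$ (sitting in the half-cells) glued by one local mutation on the \emph{central} parallelepiped; since the corner block of the big cube lies entirely inside one half-cell and is disjoint from the central parallelepiped for $m \geq 2$, the mutation never touches those edges, and by induction those edges are exactly the corner edges of $G_1$. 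The base case $m=1$ is a direct check of how the $g_d$-ordering traverses the $2^d$ unit cells: consecutive indices $k=2j$ and $k=2j+1$ differ only in the first coordinate bit of $g_d$ (since $g_d(2j)\oplus g_d(2j+1)$ flips only bit $0$ by the Gray-code property), so within any aligned $2\times\cdots\times 2$ block the induced edges pair up cells differing only in the first coordinate — exactly the claimed $(\{0\}\times p,\{1\}\times p)$ and $(\{2\}\times p,\{3\}\times p)$ pattern after relabelling.

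\textbf{Step 3: assemble and identify the labels.} Applying Step 2 to each of the two relevant half-cells (those differing in the first coordinate bit), each contributes one short edge in the first coordinate to each fibre $p$; concatenating the two slabs along the first axis and labelling the four first-coordinate layers $0,1,2,3$ gives precisely the edge set in the statement. I would also note that reversals or index-shifts accumulated at deeper scales act by $x\mapsto \pm x + c$ on cycle indices and hence only relabel which vertices are "$0$" versus "$3$"; since the asserted edge set is symmetric under the relevant relabelling (swapping the roles of the two half-cells, or reversing $0\leftrightarrow 3$, $1\leftrightarrow 2$), this does not change the induced graph. \textbf{The main obstacle} I anticipate is Step 1's bookkeeping: making the "corner block of a half-cell = part of the central block of the big cube" identification fully precise across all $d$ coordinates, and checking the $m=1$ base case carefully enough that the Gray-code adjacency really does produce edges only along the first coordinate within each $2\times\cdots\times 2$ sub-block — everything else is a clean disjointness-plus-induction argument.
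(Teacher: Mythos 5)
Your overall plan---tile the central parallelepiped by the corner blocks of the $2^d$ half-cells adjacent to the centre, then argue that intermediate mutations never disturb them---is the right shape of argument, but the key claim of your Step 2 is false, and the disjointness you invoke to justify it fails exactly where it matters. The corner $2\times\cdots\times2$ block is \emph{not} left untouched by the construction of $G_m$: already at scale $m=2$, the central $4\times2\times\cdots\times2$ parallelepiped of a side-$4$ cell occupies \emph{all four} values of the first coordinate and the middle two values of every other coordinate, so it meets each of the $2^d$ corner blocks of that cell in exactly two cells, and the stage-$2$ mutation deletes precisely the edge joining those two cells. Consequently, for every $m\geqslant2$ the restriction of $G_m$ to an aligned side-$2$ block is the Gray cycle \emph{minus one edge} (plus two chaining edges leaving the block), not ``exactly the edges of $G_1$''; indeed it could not be a full cycle, since $G_m$ is a single Hamiltonian cycle that must enter and leave the block. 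Your induction therefore asserts a false invariant, and it breaks at the step from $m=1$ to $m=2$ (and the same deletion recurs inside every half-cell at its own scale-$2$ stage, so the blocks relevant to the lemma are affected for all $n\geqslant3$).

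The lemma nevertheless holds because the single edge you actually need from each corner block---the one joining the two cells that lie in the scale-$n$ central parallelepiped, namely the cells with first coordinate $2^{n-1}-2,\,2^{n-1}-1$ and all other coordinates equal to $2^{n-1}-1$---is a \emph{different} edge from the one consumed at scale $2$ (whose endpoints have the other coordinates equal to $2^{n-1}-2$). This is exactly what the paper's proof isolates: the central parallelepipeds of all cells at all scales $k\geqslant2$ are pairwise disjoint (in each coordinate $j\geqslant2$ a scale-$k$ parallelepiped occupies residues $2^{k-1}-1$ and $2^{k-1}$ modulo $2^k$, and these residue sets are disjoint for distinct $k\geqslant2$), so no mutation below the top scale ever removes an edge inside the scale-$n$ parallelepiped, and what survives there is what the initial Gray cycles put there. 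To repair your argument, replace the full corner block by the two-cell sub-block sitting inside the scale-$n$ parallelepiped and prove disjointness for \emph{that} set, or adopt the pairwise-disjointness of the central parallelepipeds directly. A smaller slip: in Step 3 you speak of ``the two relevant half-cells'' each contributing an edge to every fibre $p$; in fact all $2^d$ half-cells contribute, each supplying exactly one edge to exactly one fibre.
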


\begin{proof}
    Assume that we have the grid of integral points in the cube~$[0,2^{n+1}-1]^d$,
    and we initially have the cyclic traversals of cubes of side~$2$.
    They form a grid of~$2^n$ cells.
    Then we consequently apply mutations gathering cycles into cycles
    traversing cells of sizes~$4, 8, \ldots, 2^{n+1}$.
    Each time we consider the central $4\times 2\times \ldots \times 2$-parallelepiped
    in some cell of size~$4,8,\ldots,2^{n+1}$, then
    each of these parallelepipeds has even minimal first coordinate and odd minimal other coordinates.
    This implies that the restrictions of initial $2^{nd}$ cycles on them
    are same and coincide with the written above graph.
    At the same time, these parallelepipeds have pairwise non-intersecting sets of vertices.
    Therefore, mutations of previous steps of construction do not affect the final step.
\end{proof}

On Fig.~\ref{fig:figure-d2n3step1},~\ref{fig:figure-d2n3step2},~\ref{fig:figure-d2n3step3}
we see examples of mutations.
On these figures we color some black edges red.
Then we draw a number of green edges such that together the green and red edges form cycles.
After the mutation we remove red edges and color green edges black.
Note that if in the red-green cycle we contract all the red edges, then we obtain
exactly the graph corresponding to the traversal of the cube of size~$2$ and
the same dimension.
Obviously, we will see the same behavior in any dimension.

\begin{example}
    \FloatBarrier
    Consider the case of~$d=2$ and~$n=3$ (side length~$8$).
    \begin{figure}[ht]
        \centering
        \includegraphics{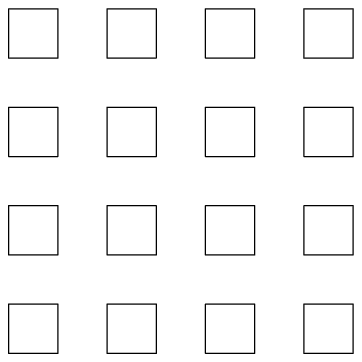}
        \qquad
        \includegraphics{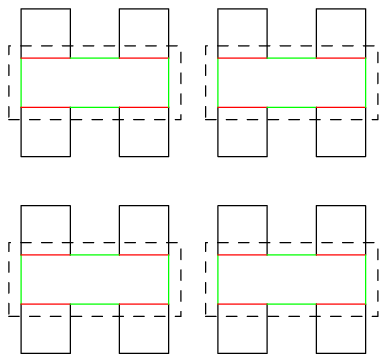}

        \caption{Join of cycles in squares of side~$2$ into cycles in squares of side~$4$.}
        \label{fig:figure-d2n3step1}
    \end{figure}
    \begin{figure}[ht]
        \centering
        \includegraphics{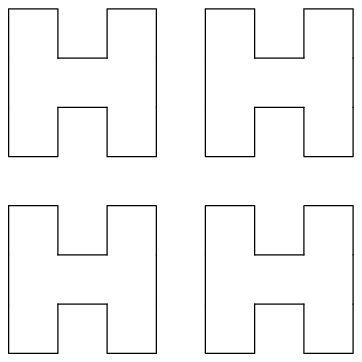}
        \qquad
        \includegraphics{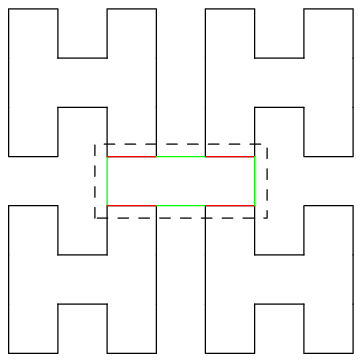}

        \caption{Join of cycles in squares of side~$4$ into cycles in squares of side~$8$.}
        \label{fig:figure-d2n3step2}
    \end{figure}
    \begin{figure}[ht]
        \centering
        \includegraphics{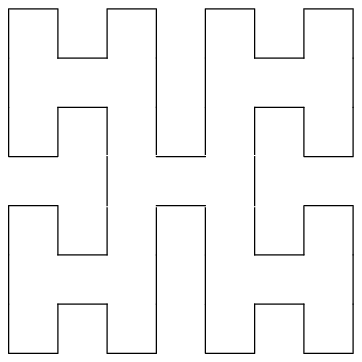}
        \caption{H-curve for $d=2$, $n=3$.}
        \label{fig:figure-d2n3step3}
    \end{figure}

    On fig.\,\ref{fig:figure-d2n3step1} we join cycles of side length~$2$,
    after on fig.\,\ref{fig:figure-d2n3step2} we join cycles of side~$4$,
    and on fig.\,~\ref{fig:figure-d2n3step3} we see the result.
    \FloatBarrier
\end{example}

\begin{definition}
    We call the constructed above family of curves \textit{H-curves} for all~$d>1, n$.
    Also, we will call \textit{H-curves} the limit curves for all~$d>1$.
\end{definition}

We name them this way for the form of the second iteration of plane curve.
Next iterations also looks like the letter `H', but more tangled and shaggy.
For $d \geqslant 3$ we can consider these curves as high-dimensional ``generalizations'' of letter `H'.

\begin{example}
    \FloatBarrier
    \begin{figure}
        \centering
        \includegraphics{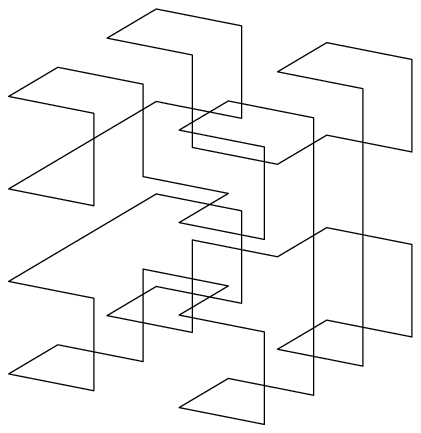}
        \qquad
        \includegraphics{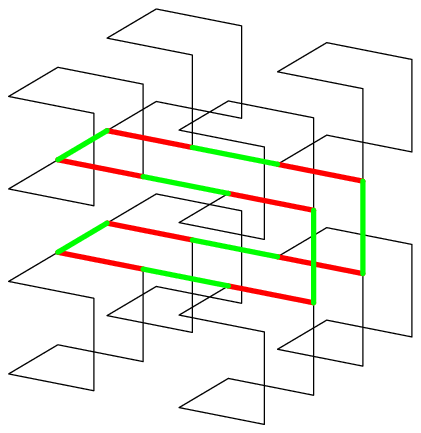}
        \caption{Example of H-curve for $d=3,n=2$ and the central mutation.}
        \label{fig:hc3,2}
    \end{figure}

    On fig.\,\ref{fig:hc3,2} we see the example of H-curve for~$d=3,n=2$ (side length~$4$) and how looks
    the mutation in three-dimensional case.
    In the higher dimensions it looks the same, but less illustrative.
    \FloatBarrier
\end{example}

\begin{theorem}
    For any~$n\in\mathbb{N}$ and any~$d>1$ the H-curve cyclically traverses all the unit cells.
    Each move to the next cell is a move to an adjacent cell.

    For any~$k<n$, for one cycle the curve one time enters and one time
    leaves any of cells of grid of~$2^{dk}$-side cells,
    and the traversal of these cells is~H-curve for the pair $(d,k)$.

    For $n \to \infty$ we can choose infinitely decreasing sequence of cells
    such that each one contains all the following,
    and we obtain the sequence that converges to the continuous map~$h\colon S^1 \to [0,1]^d$.
\end{theorem}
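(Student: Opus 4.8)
The plan is to treat the three assertions in order, each time reducing to a property of the local mutation established in the preceding Lemma. First I would set up the bookkeeping: an $(d,n)$-configuration is a disjoint union of oriented cycles on the grid $\{0,\dots,2^{n+1}-1\}^d$ (equivalently, on the $2^{(n+1)d}$ unit cells), and the construction produces, for each $k=1,\dots,n$, a configuration $C_k$ obtained from $C_{k-1}$ by simultaneously performing the central mutation inside every axis-aligned subcube of side $2^{k+1}$. I would record two invariants preserved by a single mutation step: (i) the underlying set of directed edges is, outside the $4\times2\times\cdots\times2$ parallelepiped being mutated, unchanged, and within it the two edges $(\{0\}\times p,\{1\}\times p)$ and $(\{2\}\times p,\{3\}\times p)$ are replaced by a rewiring that, on contracting the untouched ``parallel'' edges, is exactly the side-$2$ traversal $g_d$; (ii) every edge of every configuration joins two cells that are adjacent (share a facet). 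Invariant (ii) is immediate for $C_0$ (the side-$2$ cycles use only unit steps) and is preserved because the new green edges in a mutation are, by construction, chosen between touching cells; this proves the clause ``each move to the next cell is a move to an adjacent cell.''

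For the first assertion — that after all $n$ mutation rounds there is a single cycle through all unit cells — I would argue by induction on the rounds, or rather on the side length. The key sublemma is that one central mutation, applied to $2^d$ cycles that each traverse one of the $2^d$ sub-blocks of a block of side $2^{k+1}$, merges them into a single cycle traversing the whole block. This is exactly the statement that contracting the red edges in the red–green cycle yields the side-$2$ graph $g_d$, which is a single $2^d$-cycle (here I use that $g_d$ is a cyclic permutation of $\{0,\dots,2^d-1\}$, i.e. the reflected Gray code on $d$ bits is a Hamiltonian cycle on the cube $\{0,1\}^d$): since the contracted graph is one cycle and each contracted red edge lay inside a single sub-block-cycle, the uncontracted graph is again one cycle, now spanning all $2^{(k+1)d}$ cells of the block. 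Applying this at every scale $k=1,\dots,n$, and using that the mutations at different scales act on vertex-disjoint parallelepipeds (the content of the Lemma), the configuration $C_n$ is a single Hamiltonian cycle on the unit cells — which is what ``cyclically traverses all the unit cells'' means.

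For the self-similarity clause I would fix $k<n$ and look at the grid of $2^{d k}$-sided super-cells. I want: the single cycle $C_n$ enters and leaves each super-cell exactly once, and the induced traversal of the $2^{dk}$ super-cells is the H-curve for $(d,k)$. The point is that the construction is built scale by scale: after $k$ rounds the configuration restricted to any super-cell is precisely $C_k$ for that cell (a single cycle on its $2^{dk}$ sub-cells, by the first assertion applied with $n=k$), and the remaining $n-k$ rounds of mutation act only on parallelepipeds of side $\ge 2^{k+1}$, hence on the ``boundary region'' that re-links whole super-cells without touching the interior wiring of any super-cell; equivalently each such later mutation removes at most a bounded number of edges internal to a fixed super-cell and never adds an internal edge, so the net effect is to open each super-cell's internal cycle at one edge and splice the super-cells together. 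Tracking which edge is opened and in what order the super-cells are visited, this splicing pattern is governed by the same mutations run at the coarse scale, i.e. it is the $(d,k)$ H-curve; and ``opened at one edge, spliced once'' is exactly ``enters once, leaves once.''

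Finally, for convergence to a continuous map $h\colon S^1\to[0,1]^d$, I would use the standard compactness argument. Rescale the side-$2^{n+1}$ grid to $[0,1]^d$; the $(d,n)$ H-curve, parametrised at constant speed on $S^1$ (identified with $\mathbb R/\mathbb Z$), is a map $h_n$ that is piecewise-linear and, by the adjacency clause, moves between unit cells of diameter $\sqrt d\,2^{-(n+1)}$; hence $\|h_{n+1}-h_n\|_\infty\le c_d 2^{-n}$ once the parametrisations are matched using the self-similarity clause (the restriction of $h_{n+1}$ to the arc corresponding to a super-cell stays in that super-cell, which is where $h_n$ also is). Thus $(h_n)$ is uniformly Cauchy in $C(S^1,[0,1]^d)$ and converges uniformly to a continuous limit $h$; surjectivity of $h$ follows because the image is compact and $2^{-(n+1)}\sqrt d$-dense for every $n$. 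The ``infinitely decreasing sequence of cells, each containing the next'' is obtained by choosing at stage $k$ the super-cell of side $2^{-k}$ that contains a prescribed target point and observing, via the self-similarity clause, that these are nested; the corresponding nested arcs in $S^1$ shrink to a point mapping to the target.

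\textbf{Main obstacle.} The routine parts are the compactness/continuity argument and the adjacency invariant. The real work is the self-similarity clause: one must check carefully that the scale-$k$-and-coarser mutations, acting on the side-$\ge 2^{k+1}$ parallelepipeds, disturb the inside of a fixed $2^{dk}$-super-cell only by deleting exactly one internal edge (opening its cycle) and never creating a new internal edge, so that the induced coarse traversal is genuinely the $(d,k)$ H-curve rather than some other merge pattern. This is where the vertex-disjointness statement of the Lemma, together with a precise description of which two sub-cells are chosen in the local mutation, has to be used with care.
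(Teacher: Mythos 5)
Your overall architecture (a merge lemma via the Gray-code cycle for the first clause, adjacency as an invariant of the mutation, scale induction for self-similarity, a uniform-Cauchy argument for the limit) matches the paper's proof, which disposes of the first two clauses only by appeal to obviousness; your expansion of the merging step (the alternating red--green cycle, contraction to $g_d$, splicing the $2^d$ Hamiltonian paths) is correct and fills a real gap in the paper's exposition.

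The genuine problem sits exactly where you flag it, but the resolution you sketch is false rather than merely unverified. You claim the mutations at scales above $k$ ``open each super-cell's internal cycle at one edge and never add an internal edge.'' The second half is right; the first half fails. A super-cell loses one internal edge at \emph{every} scale $j>k$ at which it contains cells of the central parallelepiped of its side-$2^j$ ancestor, i.e.\ at every scale for which it is the innermost sub-cell of its side-$2^{j-1}$ block, and this can happen more than once. Concretely, for $d=2$, $n=3$ the side-$2$ cell $\{2,3\}\times\{2,3\}$ loses the edge $(2,2)$--$(3,2)$ to the side-$4$ mutation of the block $\{0,\dots,3\}^2$ (central parallelepiped $\{0,\dots,3\}\times\{1,2\}$) and then also loses $(2,3)$--$(3,3)$ to the side-$8$ mutation (parallelepiped $\{2,\dots,5\}\times\{3,4\}$); the vertex-disjointness in the Lemma guarantees these are two distinct, still-present edges, which is precisely what produces the double opening, so the final cycle enters and leaves this cell \emph{twice} (consistent with the paper's own description of the curve as returning to the centre inside each octant). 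Hence ``enters once, leaves once'' holds only for the $2^d$ half-size sub-cubes, since only the top-level mutation creates edges crossing their boundaries, and the usable induction is the octant recursion: the restriction of the $(d,n)$ curve to each half-size sub-cube is that sub-cube's $(d,n-1)$ curve with one edge deleted, iterated downward. Your continuity argument survives if the parametrisations are matched through this octant recursion (nested arcs mapped into nested octants, giving $\|h_{n+1}-h_n\|_\infty\le\sqrt{d}\,2^{-n}$), but not through the cell-by-cell refinement you describe, because the $(d,n+1)$ curve does \emph{not} visit each side-$2$ cell as a single contiguous run. You should either reformulate the second clause via the octant recursion or weaken it to a statement about the order of first entries.
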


\begin{proof}
    The first part of statement is obvious by the construction of curve and by choice of the mutation.

    The second part is obvious for~$k=n-1$ by the iteration of construction and for any~$k$
    by induction from~$n-1$ down to~$1$.

    For the third part we need to take unit cells in such a way that
    \begin{itemize}
        \item for increasing~$n$ the matching between smaller cells and intervals is a subdivision of matching between
        larger cells and intervals;
        \item end points of~$[0,1]$ are mapped to the same point.
    \end{itemize}
    Actually, it is enough to take the first cell of initial subdivision of each next time to take the first sub-cell,
    where the cycle enters the cell.
    The condition that $0$ and $1$ are mapped into the same point is obvious.
    Continuity is standard and follows from the same reasons as for Hilbert and other curves.
\end{proof}

We will compare below properties of H-curve, Hilbert curve and Z-curve.

\begin{remark}
    Actually, for dimension~$2$ there is only one construction method of the Hilbert.
    As it was noticed in~\cite{16Curves}, for higher dimensions there are many ways to generalize the construction of
    the curve to any dimension such that its restriction to~$d=2$ gives the usual plane Hilbert curve.
    In~\cite{16Curves} there are $5$ ways to do this.
    The commonly used version seems to be called Butz-Hilbert curve in~\cite{16Curves}.
    For higher dimensions is seems that different variations of Hilbert curve would give
    very close results.
    At the same time, their constructions have the same complexity (computational and mathematical).
    So, we will compare H-curve with the commonly used Butz-Hilbert curve.
\end{remark}

\begin{remark}
    One of curves constructed in~\cite{16Curves} is called there \textit{inside-out curve}.
    For $n=2$, it returns to the cell adjacent to the initial point, but for bigger~$n$ it loses continuity.
    In this sense H-curve can be called an \textit{inside-out-repeat curve} as a curve moving from the center to the
    perimeter in one octant, back to the center, out into another octant and so on cyclically.
\end{remark}

On fig.\,\ref{fig:d2n4} we see Z-curve, Hilbert curve and H-curve on plane.
On fig.\,\ref{fig:d3n2} we see Z-curve, Butz--Hilbert curve and H-curve in the same axes.

\begin{figure}
    \centering
    \includegraphics[scale=0.8]{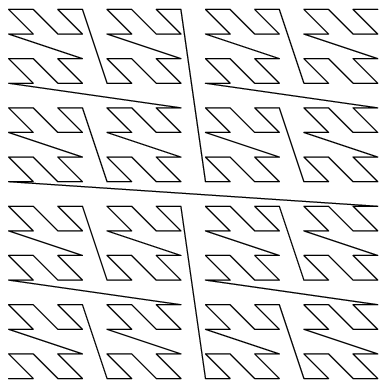}
    \quad
    \includegraphics[scale=0.8]{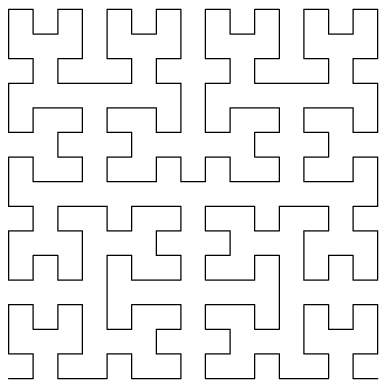}
    \quad
    \includegraphics[scale=0.8]{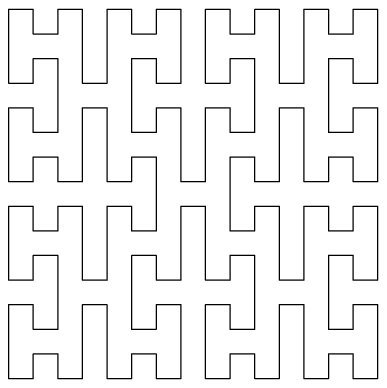}
    \caption{Z-curve, Hilbert curve, H-curve for $d=2,n=4$}
    \label{fig:d2n4}
\end{figure}

\begin{figure}
    \centering
    \includegraphics[scale=0.8]{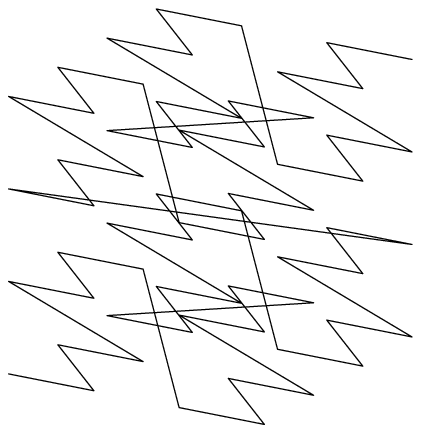}
    \includegraphics[scale=0.8]{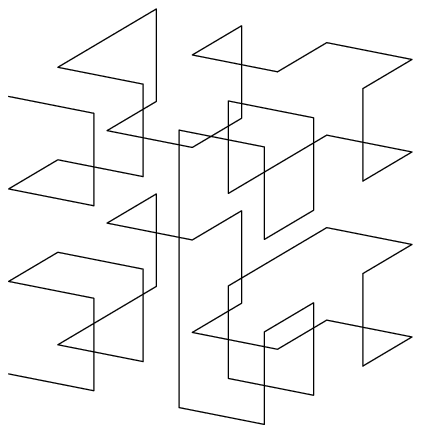}
    \includegraphics[scale=0.8]{pict-0.eps}
    \caption{Z-curve, Butz--Hilbert curve, H-curve for $d=3,n=2$}
    \label{fig:d3n2}
\end{figure}

\subsection{Index shifts and direction reversals}
\label{subsec:shifts-and-reversals}

Our next goal is to describe the correspondence between a unit cell with coordinates~$\overline{a}=(a_0,\ldots,a_d)$
in $d$-dimensional cube with side length~$2^n$ and its index~$r$ in the traversal along H-curve.
Say, we \textit{encode} the point~$\overline{a}$ by the index~$r$ and~\textit{decode} the index~$r$
to the point~$\overline{a}$.
So, we want to describe two mutually inverse functions
\[
    \xymatrix{
    \{0,\ldots,2^n-1\}^d \ar@/_/[rr]
    &&\mathbb{Z}/2^{nd}\mathbb{Z}. \ar@/_/[ll]
    }
\]
We will construct the functions recursively.
To make the construction easier, let us introduce some notation.

Let us write bits of $d$-bit numbers~$a_i$ into the matrix~$n\times d$ as rows.
Denote $n$-bit numbers in the rows of transposed matrix by~$\alpha^0,\ldots,\alpha^d$.
These coordinates are also known as coordinates in \textit{Z-order}.
It is easy to pass from $(a_i)$ to $(\alpha^j)$ and back, but~$\alpha^j$ are more convenient for algorithm design.
So, we will describe functions
\[
    \xymatrix{
    \{0,\ldots,2^d-1\}^n \ar@/_/[rr]_-{\mathtt{encode}}
    &&\mathbb{Z}/2^{nd}\mathbb{Z}. \ar@/_/[ll]_-{\mathtt{decode}}
    }
\]

Geometrical sense of~$\alpha$-coordinates corresponds to the iterative construction of curve.
Each cell can be coded by~$d$ bits of coordinates.
These~$d$ bits form the number~$\alpha^j$ for $j$-th iteration.
When subdividing the cube into the grid of~$2^d$ cells, we choose one of them
which has coordinates~$(\alpha^0,\ldots,\alpha^{d-1})$.

\begin{lemma}
    The central $4\times 2 \times \ldots \times 2$-parallelepiped in $d$-dimensional cube with side~$2^n$ consists of
    the set of points
    \[
        c_\alpha  = (\alpha, \overline{\alpha},\ldots, \overline{\alpha}) \text{ and }
        c_\alpha' = (\alpha, \overline{\alpha},\ldots, \overline{\alpha}\oplus 1)
    \]
    for all~$\alpha\in\mathbb{F}_2^n$.
\end{lemma}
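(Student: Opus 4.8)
The plan is to prove the equality by unwinding both sides into explicit binary data and matching them. First I would make the left-hand side explicit in Cartesian coordinates: by the description already used in the proof of the first Lemma (``even minimal first coordinate and odd minimal other coordinates''), the central $4\times 2\times\ldots\times 2$-parallelepiped $P$ in the cube $\{0,\ldots,2^{n}-1\}^{d}$ is the set of $(a_{1},\ldots,a_{d})$ with $a_{1}\in\{2^{n-1}-2,\,2^{n-1}-1,\,2^{n-1},\,2^{n-1}+1\}$ and $a_{i}\in\{2^{n-1}-1,\,2^{n-1}\}$ for $2\leqslant i\leqslant d$; note this already presupposes $n\geqslant 2$.

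Next I would write out the $n$-bit expansions of the numbers involved: $2^{n-1}=1\,0^{\,n-1}$, $2^{n-1}-1=0\,1^{\,n-1}$ (so these two are bitwise complementary, $\overline{2^{n-1}}=2^{n-1}-1$), $2^{n-1}-2=0\,1^{\,n-2}\,0$ and $2^{n-1}+1=1\,0^{\,n-2}\,1$. From these one reads off the structure of every coordinate of a point of $P$: for each $i$ the top bit of $a_{i}$ is free, the $n-2$ ``middle'' bits all equal the complement of that top bit, and the bottom bit equals the complement of the top bit when $i\geqslant 2$ but is free when $i=1$. (For $n=2$ there are no middle bits and the formula below degenerates accordingly.)

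Then I would translate this into the $\alpha$-coordinates using their defining property that the $i$-th bit of $\alpha^{j}$ is the $j$-th bit of $a_{i}$, with $\alpha^{0}$ collecting the top bits and $\alpha^{n-1}$ the bottom bits. Setting $\alpha:=\alpha^{0}$, an arbitrary $d$-bit vector, the previous paragraph gives $\alpha^{1}=\cdots=\alpha^{n-2}=\overline{\alpha}$ and $\alpha^{n-1}\in\{\overline{\alpha},\,\overline{\alpha}\oplus\varepsilon\}$, where $\varepsilon$ flips precisely the bit attached to the distinguished width-$4$ coordinate; with the bit ordering the paper uses for $d$-bit cell labels this $\varepsilon$ equals $1$, which is why the statement reads $\overline{\alpha}\oplus 1$. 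Conversely every tuple of this shape visibly arises from a point of $P$. Hence $P=\{c_{\alpha}\}\cup\{c'_{\alpha}\}$, and a one-line count confirms the union is disjoint and exhausts $P$: $|P|=4\cdot 2^{d-1}=2^{d+1}$, while $c_{\alpha}=c'_{\alpha'}$ is impossible since equality of first components forces $\alpha=\alpha'$ and then $\overline{\alpha}=\overline{\alpha}\oplus 1$.

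I do not expect a genuine obstacle: once the central box is pinned down, the argument is a direct computation with binary representations. The only points needing care are fixing the bit-ordering conventions (which Cartesian coordinate occupies which bit of $\alpha^{j}$, hence the precise meaning of ``$\oplus 1$'') and the boundary case $n=2$, where the chain of middle components is empty.
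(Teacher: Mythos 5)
Your proof is correct and rests on the same underlying observation as the paper's: a cell lies in the central parallelepiped exactly when, after the first subdivision chooses octant $\alpha$, every finer subdivision makes the complementary choice (which is precisely the statement that $2^{n-1}$ and $2^{n-1}-1$ have binary expansions $1\,0^{\,n-1}$ and $0\,1^{\,n-1}$), with the width-$4$ direction contributing the extra $\oplus 1$ freedom in the last level. The paper phrases this as a top-down geometric recursion ("take opposite coordinate choices on each next iteration"), while you carry out the equivalent bit-level computation explicitly and add a cardinality check; both are valid, and your version pins down the conventions the paper leaves implicit.
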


\begin{proof}
    Denote the central cube of size~$2$ by~$C$ and the central~$4\times 2 \times \ldots \times 2$-parallelepiped by~$P$.
    Each half-size cube~$\alpha$ has a unique unit cell in~$C$.
    Denote it by~$c_\alpha$.
    Each half-size cube~$\alpha$ has two unit cells in~$P$,~$c_\alpha$ is on of them.
    Denote the other one by~$c_\alpha'$.
    The index~$\alpha$ for both~$c_\alpha$ and~$c_\alpha'$ corresponds to the
    first coordinate of f unit cell in~$\alpha$-coordinates.
    Our goal is to find remaining~$\alpha$-coordinates of these points.

    Bits of~$\alpha$ geometrically mean the choice of half-size cube in the first
    subdivision operation.
    To get the cell~$c_\alpha \in C$, we should take opposite coordinate choices
    for each coordinate on each next iteration.
    This exactly implies that all the next $\alpha$-coordinates equal~$\overline{\alpha}$.

    To take~$c_\alpha'$, we should take the same sub-cells until the last subdivision.
    At the last iteration we should change the first coordinate to get the adjacent
    cell along the first coordinate.
    This exactly means that~$c_\alpha'$ has all next coorinates equal~$\overline{\alpha}$
    until the last one which equals~$\overline{\alpha}\oplus 1$.
\end{proof}

\begin{corollary}
    The traversal of H-curve enters the half-size sub-cell~$\alpha$ at one of unit cells~$c_\alpha$ and~$c_\alpha'$ and
    leaves at other one.
\end{corollary}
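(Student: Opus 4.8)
The plan is to deduce this directly from the preceding Lemma together with the first Lemma of Section~\ref{subsec:loc-mutation} (the one describing the restriction of the graph to the central $4\times2\times\ldots\times2$-parallelepiped). First I would recall that by the Lemma just proved, the half-size sub-cell with first $\alpha$-coordinate equal to $\alpha$ meets the central parallelepiped $P$ in exactly the two unit cells $c_\alpha$ and $c_\alpha'$, and in no others; moreover every unit cell of $P$ arises this way for a unique $\alpha\in\mathbb{F}_2^n$. So $P$ is partitioned into the $2^n$ pairs $\{c_\alpha,c_\alpha'\}$, one pair per half-size sub-cell, and the pairing matches exactly the pairs $(\{0\}\times p,\{1\}\times p)$ and $(\{2\}\times p,\{3\}\times p)$ from the graph-restriction Lemma (up to relabelling the first coordinate $\{0,1,2,3\}$ by the two consecutive pairs of unit cells along the first axis inside $P$).

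Next I would invoke the graph-restriction Lemma to say that, after all mutations have been applied, the only edges of the final H-curve graph that touch $P$ are precisely the edges joining $c_\alpha$ to $c_\alpha'$ for each $\alpha$ — because the central mutation at the last step chains these $2^d$ edges into the cycle and the earlier mutations do not affect them (they live on disjoint, differently-placed central parallelepipeds, as established in that Lemma's proof). Consequently, inside the half-size sub-cell $\alpha$, the single edge connecting that sub-cell to the rest of the cube through $P$ is the edge $(c_\alpha,c_\alpha')$. Since the H-curve is a cycle traversing every unit cell, and by the Theorem it enters each half-size sub-cell exactly once and leaves it exactly once, the unique entry/exit edge of sub-cell $\alpha$ must be this edge; hence the curve enters $\alpha$ at one of $c_\alpha,c_\alpha'$ and leaves at the other.

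Finally I would note the orientation/direction is not asserted by the statement — only \emph{which} two unit cells serve as the entry and exit — so no further bookkeeping of reversals is needed here; that is deferred to Section~\ref{subsec:shifts-and-reversals}. The main (and only mild) obstacle is the bookkeeping in the first step: making sure the identification of the abstract vertex set $\{0,1,2,3\}\times\{0,1\}^{d-1}$ of the graph-restriction Lemma with the concrete cells $\{c_\alpha,c_\alpha'\}$ of the present Lemma is done consistently, i.e. that the ``$p$'' coordinate there matches the last-$\alpha$-coordinate pattern $\overline{\alpha}$ here. Once that dictionary is fixed, the corollary is immediate.
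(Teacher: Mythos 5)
Your argument is correct and coincides with what the paper intends: the paper states this corollary without proof, treating it as immediate from the preceding lemma (which identifies $c_\alpha$ and $c_\alpha'$ as the two unit cells of sub-cell $\alpha$ lying in the central parallelepiped) together with the fact that the final mutation deletes each internal edge $(c_\alpha,c_\alpha')$ and attaches the two replacement edges at exactly these two vertices, so they are the only cells of sub-cell $\alpha$ incident to cross-sub-cell edges. One wording slip worth fixing: the edge $(c_\alpha,c_\alpha')$ is \emph{internal} to sub-cell $\alpha$ rather than ``connecting that sub-cell to the rest of the cube''; it is the edge removed by the mutation, whose two endpoints then carry the new entering and leaving edges.
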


We have fixed the traversal of the central~$4\times 2 \times \ldots \times 2$-parallelepiped.
Therefore, if we know the order of traversal of the pair~$(c_\alpha,c_\alpha')$, then we know if we need to reverse
the traversal of half-size sub-cell~$\alpha$.
(As it was noted above, they are neighbors in the sub-cell traversal).

Now we want to determine when the direction of half-size sub-cube traversal is either the same or opposite
to the direction of traversal of these sub-cubes.
Suppose the direction is the same.
Then before the mutation we pass from one cell of~$\{c_\alpha,c_\alpha'\}$ to another one,
so, traversing the remaining part of the cycle traversing the sub-cube, we pass them in the opposite
order, because the first one becomes the leaving unit cell,
and other one becomes the entering unit cell for sub-cube.
Vice versa, if the direction changes, then the order remains the same.
To avoid confusion, we consider the chain part traversing the sub-cube, but not whole the cube, because in the cycle
the proposition that a cell follows other one is nonsense.

\begin{lemma}
    Consider a $d$-dimensional cube of side length~$2^n$.
    In the construction of H-curve the edges $(0,\ldots,0)-(1,0,\ldots,0)$ and~$(2^n-1,\ldots,2^n-1)-(2^n-2,2^d-1,\ldots,2^d-1)$
    (denote them correspondingly $\overline{0}-\overline{0}'$ and $\overline{1}-\overline{1}'$)
    are passed in the same direction (along the first coordinate) for~$d$ odd and in the opposite direction for~$d$ even.
\end{lemma}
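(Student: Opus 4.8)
The plan is to induct on~$n$, keeping track for H-curve$(d,n)$ of the directions in which it crosses the two edges. Fix an orientation of the cycle and let $\varepsilon_0(d,n),\varepsilon_1(d,n)\in\{+1,-1\}$ record, along the first coordinate, the direction in which $\overline{0}-\overline{0}'$, resp.\ $\overline{1}-\overline{1}'$, is traversed; the product $\varepsilon_0(d,n)\varepsilon_1(d,n)$ does not depend on the chosen orientation, and the assertion is precisely $\varepsilon_0(d,n)\varepsilon_1(d,n)=(-1)^{d+1}$. I would begin with a stability remark: neither of the two edges is ever destroyed by a local mutation. Indeed, by the first Lemma of this section every deleted edge lies in a central $4\times 2\times\dots\times 2$-parallelepiped of a sub-cube of some side $2^k$, hence each of its non-first coordinates equals $2^{k-1}-1$ or $2^{k-1}$; but $\overline{0}-\overline{0}'$ and $\overline{1}-\overline{1}'$ sit at corners of the big cube and have no such shape inside any sub-cube that contains them. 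Therefore $\overline{0}-\overline{0}'$ is an edge of the base $g_d$-cycle in the cell $[0,1]^d$ and $\overline{1}-\overline{1}'$ is an edge of the base $g_d$-cycle in the cell $[2^n-2,2^n-1]^d$ (with the convention, implicit in the construction, that the first coordinate is the lowest bit of~$g_d$).

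Base case $n=1$: here the curve is the $g_d$-cycle on $\{0,1\}^d$ and $\overline{0}-\overline{0}'$ joins the cells of indices $0$ and~$1$. The cell $(1,\dots,1)$ has index $t:=g_d^{-1}(2^d-1)$; since for a reflected Gray code the bit flipped between indices $i$ and $i+1$ is the one whose position equals the number of trailing binary zeros of $i+1$, exactly one cyclic neighbour of $(1,\dots,1)$ differs from it in the first coordinate, namely $\overline{1}'=(0,1,\dots,1)$, and this neighbour is the predecessor of $(1,\dots,1)$ iff $t$ is odd. Solving $g_d(t)=2^d-1$ bit by bit shows $t$ has binary digits $1,0,1,0,\dots$ read from the most significant, so $t$ is odd iff $d$ is odd; hence $\varepsilon_0(d,1)\varepsilon_1(d,1)=(-1)^{d+1}$.

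Inductive step, $n\geqslant 2$: restricting H-curve$(d,n)$ to the octants $\mathbf{0}=[0,2^{n-1}-1]^d$ and $\mathbf{1}=[2^{n-1},2^n-1]^d$ gives, by the Theorem, translated (never reflected) copies of H-curve$(d,n-1)$, possibly reversed, and by the stability remark $\overline{0}-\overline{0}'$ (resp.\ $\overline{1}-\overline{1}'$) of the big curve is the $\overline{0}-\overline{0}'$-edge of the sub-curve in $\mathbf{0}$ (resp.\ the $\overline{1}-\overline{1}'$-edge of the sub-curve in $\mathbf{1}$). By the first two Lemmas the top-level mutation deletes from $\mathbf{0}$ exactly its $\overline{1}-\overline{1}'$-edge, from $\mathbf{1}$ exactly its $\overline{0}-\overline{0}'$-edge, and then splices the $2^d$ octant paths in the cyclic order of $g_d$. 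The octant $\mathbf{0}$ has index $0$, hence is joined across the first-coordinate face to $g_d(1)$; the octant $\mathbf{1}$ has index $t$, hence (by the parity of $t$ found above) is joined across the first-coordinate face to $g_d(t-1)$ for $d$ odd and to $g_d(t+1)$ for $d$ even. This determines, for each of the two octant paths, which endpoint of the deleted edge is the entry and which the exit, hence in which cyclic orientation that path runs through its sub-curve; carrying this out gives $\varepsilon_0(d,n)=-\varepsilon_0(d,n-1)\varepsilon_1(d,n-1)$ in all cases, and $\varepsilon_1(d,n)=-\varepsilon_0(d,n-1)\varepsilon_1(d,n-1)$ for $d$ odd while $\varepsilon_1(d,n)=+\varepsilon_0(d,n-1)\varepsilon_1(d,n-1)$ for $d$ even. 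Multiplying, and using $\big(\varepsilon_0(d,n-1)\varepsilon_1(d,n-1)\big)^2=1$, yields $\varepsilon_0(d,n)\varepsilon_1(d,n)=(-1)^{d+1}$, which closes the induction.

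The delicate point is the last computation in the inductive step: one must follow, through the red/green surgery of the mutation, which endpoint of the deleted edge becomes the entry of the new octant path, and hence whether that path traverses the sub-curve in its reference orientation or the opposite one. This is exactly what the description of $c_\alpha,c_\alpha'$ and the discussion of ``same versus opposite direction'' preceding the statement are set up to handle: applied with $\alpha$ equal to the first-coordinate trace of the all-zero corner, respectively the all-one corner, they reduce the step to a routine (if slightly tedious) check and also make the stability remark transparent.
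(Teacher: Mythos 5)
Your proposal is correct and follows essentially the same route as the paper: the computational core (solving $g_d(t)=2^d-1$ to get $t$ with alternating bits, so that $t$ is odd iff $d$ is odd, and reading off which of $\overline{1},\overline{1}'$ precedes the other) is exactly the paper's $n=1$ calculation. The only difference is that where the paper reduces to $n=1$ in one line (``the mutations do not affect the edges at the corner vertices''), you make that reduction rigorous by an explicit induction on $n$ tracking the two orientation signs through the central mutation; this fills in bookkeeping the paper leaves implicit, and your recursions and their product $\varepsilon_0\varepsilon_1=(-1)^{d+1}$ check out against the $d=2$, $n=2$ curve.
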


\begin{proof}
    The proof consists of two steps: to pass to~$n=1$ and to directly calculate for~$n=1$.

    At first, we pass to~$n=1$.
    Indeed, we obtain the traversal of the cube of size~$2^n$ by joining together traversals of~$2^{n-1}$ cubes
    with a central mutation.
    Note that the mutations does not affect the edge from/to corner vertices.
    So, for~$n>1$ the proposition is the same as for~$n=1$, and we can put~$n=1$ without loss of generality.

    Fix some~$d>1$.
    In Z-order the representations of the vertices are the following (we write the square brackets and index~$2$ to distinguish decimal and binary numbers):
    \begin{align*}
        \overline{0} = [\underbrace{0\ldots0}_d]_2, \quad & \overline{0}' = [\underbrace{0\ldots0}_{d-1}1]_2, \\
        \overline{1} = [\underbrace{1\ldots1}_d]_2, \quad & \overline{1}' = [\underbrace{1\ldots1}_{d-1}0]_2. \\
    \end{align*}
    Note that~$g(0)=[\underbrace{0\ldots0}_{d}]_2$ and~$g(1)=[\underbrace{0\ldots0}_{d-1}1]_2$.
    It only remains to find~$g^{-1}(\overline{1})$ and~$g^{-1}(\overline{1}')$.
    One of the following two cases holds:
    \begin{itemize}
        \item Let $d$ be even.
        Then
        \[
            g^{-1}([\underbrace{1\ldots1}_{d}]_2) = 2\cdot\frac{2^d-1}{3}, \quad
            g^{-1}([\underbrace{1\ldots1}_{d-1}0]_2) = 2\cdot\frac{2^d-1}{3} + 1.
        \]
        We see that~$\overline{1}'$ follows $\overline{1}$.
        \item Let $d$ be odd.
        Then
        \[
            g^{-1}([\underbrace{1\ldots1}_{d}]_2) = \frac{2^{d+1}-1}{3}, \quad
            g^{-1}([\underbrace{1\ldots1}_{d-1}0]_2) = \frac{2^{d+1}-1}{3} - 1.
        \]
        We see that~$\overline{1}$ follows $\overline{1}'$.
    \end{itemize}
    The calculations can be easily checked directly.
    This concludes the proof.
\end{proof}

\begin{corollary}
    For $d$ even there are no traverse reversals.
    For $d$ odd the only traverse reversal happens for~$n=2$.
\end{corollary}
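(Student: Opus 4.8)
The plan is to deduce the statement from the previous lemma by pushing it through the reduction set up in the two paragraphs just before that lemma. There a \emph{traverse reversal} of a half-size sub-cube $\beta$ at the mutation step that assembles the $(d,m)$-H-curve from $2^d$ copies of the $(d,m-1)$-H-curve was traced back to a comparison of two bits: the order in which the $(d,m)$-curve meets the pair $c_\beta,c_\beta'$, and the direction in which the canonical $(d,m-1)$-curve runs through the edge $c_\beta-c_\beta'$ that is cut. So I would first make this dictionary precise. By the lemma describing the cells $c_\alpha,c_\alpha'$ and its corollary, $c_\beta$ is the corner of sub-cube $\beta$ facing the centre of the cube, $c_\beta'$ is its neighbour along the first coordinate inside that sub-cube, and the cut edge $c_\beta-c_\beta'$ is exactly the first-coordinate edge of the $(d,m-1)$-curve at that corner. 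Hence the reversal of $\beta$ at step $m$ is determined by the two bits $M_\beta$ (``does the $(d,m)$-curve meet $c_\beta$ before $c_\beta'$?'') and $R^{(m-1)}_{\overline{\beta}}$ (``does the $(d,m-1)$-curve run from the corner $\overline{\beta}$ towards its first-coordinate neighbour?''), the rule recalled above being that no reversal occurs iff these two bits disagree.

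Next I would pin down each bit. The bit $M_\beta$ is independent of $m$, because the local mutation is literally the same operation in every dimension and at every depth and, as observed in the discussion of the mutation, contracting the cut edges turns it into the side-$2$ Gray-code cycle; so the order in which the central $4\times2\times\ldots\times2$-parallelepiped -- hence each pair $c_\beta,c_\beta'$ -- is traversed depends only on $\beta$. Writing out the ``green'' edges of the mutation, the inner cells $c_\beta$ are joined along the Gray edges flipping the first-coordinate bit and the outer cells $c_\beta'$ along the remaining Gray edges; this makes $M_\beta$ equal to the parity of the Gray index $g_d^{-1}(\beta)$, equivalently to the parity of the Hamming weight $w(\beta)$. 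The bit $R^{(m)}_v$ I would compute by downward induction on $m$: the previous lemma, in the cases $v=\overline{0}$ and $v=\overline{1}$ recorded there and by the same direct computation for a general corner $v$, says that the side-$2$ curve runs through the first-coordinate edge at $v$ from $v$ towards its neighbour exactly when $w(v)$ is even; and since the mutation touches no corner edge, passing from depth $m-1$ to depth $m$ flips the edge at corner $v$ precisely when sub-cube $v$ is reversed at step $m$. So all reversal bits come out of iterating ``$\beta$ reversed at step $m$ $\iff$ $M_\beta$ and $R^{(m-1)}_{\overline{\beta}}$ agree'' on top of the base value $R^{(1)}_v$, which equals $1$ exactly when $w(v)$ is even.

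Running the iteration is where the parity of $d$ enters, and it does so through the previous lemma. For $d$ even one has $w(\overline{\beta})\equiv w(\beta)\pmod2$, so $R^{(1)}_{\overline{\beta}}$ is the complement of $M_\beta$ for every $\beta$; hence nothing is reversed at depth $2$, the corner directions are unchanged, and an immediate induction gives that nothing is ever reversed -- this is the ``opposite direction'' conclusion of the previous lemma propagated to all depths. For $d$ odd one has $w(\overline{\beta})\equiv w(\beta)+1\pmod2$, so now $R^{(1)}_{\overline{\beta}}=M_\beta$ and the criterion forces a reversal at the first mutation step $n=2$; this flips the relevant corner directions, after which the criterion is met without reversal for every $\beta$, the corner directions stabilise, and the induction shows that no reversal occurs at any step $n\geq3$. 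The single parity shift $w(\overline{\beta})\equiv w(\beta)+1$ for $d$ odd -- equivalently, the ``same direction'' half of the previous lemma as opposed to the ``opposite'' half for $d$ even -- is exactly what makes $n=2$ the one exceptional step.

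The hard part will be the orientation bookkeeping hidden in the middle step: fixing once and for all a global direction for the cyclic curve and tracing how the fixed green-edge pattern of the central mutation distributes the $2^d$ entry and exit cells, so that $M_\beta$ and $R^{(m)}_\beta$ are identified with the correct signs. Once those signs are nailed down, the parity count separating $d$ even from $d$ odd is immediate and essentially repeats the computation already made in the proof of the previous lemma.
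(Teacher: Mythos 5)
Your proposal is correct and follows essentially the same route as the paper: reduce to the side-$2$ Gray cycle by noting that the mutations never touch corner edges, compare the mutation's visiting order of the pair $\{c_\beta,c_\beta'\}$ with the canonical sub-curve's direction along that edge, and let the parity of $d$ (through $w(\overline{\beta})=d-w(\beta)$) decide whether the single flip at $n=2$ occurs. Your bookkeeping is more explicit than the paper's --- you track direction bits at all $2^d$ corners via Hamming-weight parity, whereas the paper examines only the corners $\overline{0}$ and $\overline{1}$ and leans on the $\mathbb{F}_2$-linearity symmetry --- but the underlying argument is the same.
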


\begin{proof}
    As we have seen above, the direction of a bigger cubes traversal
    from~$\overline{0}$ to~$\overline{0}'$
    is the same as for unit cells in cubes with side length~$2$ if in the cube of side length~$2$ the traversal
    of the edge $\overline{1}-\overline{1}'$ is opposite to the traversal direction
    of the edge $\overline{0}-\overline{0}'$.
    So, the direction for~$(n,d)$ for~$n>1$ is the same as for~$(1,d)$ for even~$d$ and opposite for odd~$d$.
    Therefore, there are no any reversals for~$d$ even and the only reversal for~$d$ odd is when~$n=2$.
    (For~$d$ odd and~$n>1$ the directions are opposite to the direction for~$n=1$, thus, they coincide.)
\end{proof}

\begin{theorem}
    For any $d>1$ and~$n\geqslant 1$ H-curve starts the traversal of sub-cell~$\alpha$
    at its unit sub-cell~$(\overline{\alpha}, \ldots, \overline{\alpha}, \overline{\alpha}\oplus p(\overline{\alpha}))$,
    and the direction of traversal changes if and only if~$d$ is odd and~$n=2$.
\end{theorem}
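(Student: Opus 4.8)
The plan is to combine the two main preliminary results already established: the lemma identifying the unit cells $c_\alpha$ and $c_\alpha'$ of the central parallelepiped (the point where the curve enters sub-cell $\alpha$ is one of these two, and leaves at the other, by the Corollary), and the Corollary on traverse reversals (for $d$ even there are no reversals; for $d$ odd the only reversal is at $n=2$). Putting these together, the direction statement is immediate: the traversal direction of a sub-cell agrees with or is opposite to the ambient direction exactly as dictated by that Corollary, so the ``direction changes iff $d$ odd and $n=2$'' clause is just a restatement of it.

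The remaining content is to pin down which of $c_\alpha$, $c_\alpha'$ is the entry cell, i.e. to determine the function $p(\overline\alpha)\in\{0,1\}$ so that the entry point is $(\overline\alpha,\ldots,\overline\alpha,\overline\alpha\oplus p(\overline\alpha))$. First I would recall from the earlier discussion that knowing the order in which the fixed traversal of the central $4\times 2\times\cdots\times 2$-parallelepiped visits the pair $(c_\alpha,c_\alpha')$ tells us whether the sub-cell $\alpha$ must be reversed, and conversely — once we know whether sub-cell $\alpha$ is reversed relative to the ambient curve, we know at which of the two cells the curve enters. So I would argue as follows: by the Corollary the reversal status of each sub-cell is the same for all $\alpha$ (it depends only on the parity of $d$ and on whether $n=2$), hence $p(\overline\alpha)$ depends on $\overline\alpha$ only through the local-mutation pattern on the central parallelepiped, which by the first Lemma of the section is the fixed graph with edges $(\{0\}\times p,\{1\}\times p)$ and $(\{2\}\times p,\{3\}\times p)$. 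Tracing the red--green cycle of the central mutation (which, after contracting red edges, is exactly the size-$2$ traversal $g_d$) shows that the curve enters the half-size sub-cell $\alpha$ at the cell on the side corresponding to the predecessor of $\alpha$ in the $g_d$-cycle and exits toward the successor; translating this into first-coordinate parity of the entry unit cell gives the value of $p(\overline\alpha)$ explicitly.

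Concretely, the key steps in order are: (1) reduce to the case $n=1$ exactly as in the proof of the preceding Lemma — the central mutations do not touch the entry/exit cells of a sub-cell, so $p$ for general $n$ is inherited from $n=1$ up to the single global reversal when $d$ is odd and $n=2$; (2) at $n=1$, compute $g_d^{-1}$ on the relevant corner labels and on their first-coordinate neighbours (this is the same computation already done in the last Lemma, now carried out for a general index $\alpha$ rather than only for $\overline 0$ and $\overline 1$), reading off whether the entry cell has first coordinate bit $\alpha$ or $\overline\alpha\oplus 1$; (3) assemble: the entry unit cell is $(\overline\alpha,\ldots,\overline\alpha,\overline\alpha\oplus p(\overline\alpha))$ with $p$ as found, and the direction-change clause is the Corollary verbatim.

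The main obstacle I expect is step (2): getting an honest closed form for $p(\overline\alpha)$ rather than a case analysis, since $g_d^{-1}$ on an arbitrary $d$-bit string is not as clean as on the all-ones string, and one has to be careful about how the ``predecessor/successor in the $g_d$-cycle'' translates through the bit-reflection and the $\oplus\,1$ in the coordinate formulas. I would handle this by working with the $\alpha$-coordinate description throughout (as the section already advocates) and by exploiting that we only ever need the first-coordinate bit of the entry cell — so the full value of $g_d^{-1}(\cdot)$ is not needed, only its parity — which should keep the computation to the same level as the Lemma already proved.
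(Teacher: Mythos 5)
Your plan is correct in outline and reaches the theorem from the same ingredients the paper uses (the parallelepiped lemma identifying $c_\alpha$ and $c_\alpha'$ as the only candidates, and the corollary on reversals giving the direction clause verbatim), but your treatment of the general index $\alpha$ takes a genuinely more computational route. The paper disposes of general $\alpha$ in one stroke: since $g$ is $\mathbb{F}_2$-linear, the map $x\mapsto x\oplus\alpha$ is a composition of coordinate reflections that carries the configuration for $\alpha=[0\ldots0]_2$ to the one for arbitrary $\alpha$, so it suffices to decide the entry cell for $\alpha=0$, where $c_0$ and $c_0'$ are exactly the corners $\overline 1$ and $\overline 1'$ already ordered in the preceding lemma; the parity function $p(\overline\alpha)$ then appears automatically when the symmetry is undone. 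Your version instead computes $g_d^{-1}$ on the labels of $c_\alpha$ and $c_\alpha'$ directly and observes that only one bit matters; this does work, and in fact more cleanly than you fear: since $g_d(1)=1$, the two indices $g_d^{-1}(\overline\alpha)$ and $g_d^{-1}(\overline\alpha\oplus 1)$ differ exactly in the last bit, and the last bit of $g_d^{-1}(\overline\alpha)$ is precisely the parity of $\overline\alpha$, which is where $p(\overline\alpha)$ comes from --- so your approach makes the origin of the parity function more transparent, at the cost of redoing a computation the symmetry argument avoids. One caution on your step (1): the reduction to $n=1$ in the preceding lemma worked because the corner edges $\overline 0-\overline 0'$ and $\overline 1-\overline 1'$ are never touched by any mutation, whereas the entry/exit cells $c_\alpha,c_\alpha'$ are exactly the cells rewired by the final central mutation; what you actually need (and what the first lemma of the section supplies) is that the earlier, smaller-scale mutations leave the restriction of the graph to the central parallelepiped unchanged, so phrase that step accordingly rather than claiming the mutations do not touch these cells.
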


\begin{proof}
    Actually, it only remains to find which one of~$c_\alpha$ and~$c_\alpha'$ is the initial point.
    Note that the function~$g$ is $\mathbb{F}_2$-linear as a function~$g\colon \mathbb{F}_2^d \to \mathbb{F}_2^d$.
    Geometrically the operation~$\oplus\alpha$ corresponds to the composition of reflections along coordinate
    hyperplanes corresponding to bits equal~$1$ in~$\alpha$.
    Therefore, we can find only the initial point of the sub-cube corresponding to~$\alpha=[0\ldots0]_2$.
    In the traversal of this sub-cube (before the mutation) $c_\alpha$ and~$c_\alpha'$ follows each other.
    So, after the mutation the second one becomes the entering unit cell of a sub-cube, and first one becomes the leaving unit cell.
    From the reasoning above it follows that for~$d$ even the entering point is~$\overline{1}$ and for~$d$ odd
    the entering point is~$\overline{1}'$.
    Restoring generality of~$\alpha$ and due~$\mathbb{F}_2$-linearity, we can rewrite the initial point of sub-cube~$\alpha$
    with the parity function~$p$ as the point~$(\overline{\alpha}, \ldots, \overline{\alpha}, \overline{\alpha}\oplus p(\overline{\alpha}))$ in Z-order.
\end{proof}

\subsection{Algorithmic construction}
\label{subsec:alg}

Here we briefly describe algorithms of two functions:
\begin{itemize}
    \item[\texttt{encode}] which maps $d$-dimensional array of cell coordinates in the cube $\{0,\ldots,2^n-1\}^d$ to the index,
    \item[\texttt{decode}] performing the inverse function.
\end{itemize}

Here index means the number of cell in the traversal.
It can be considered as an arbitrary integer number or a number in $\{0,\ldots,2^{nd}-1\}$ due to $2^{nd}$-periodicity.

For convenience, we will evaluate coordinates in \textit{Z-order}: instead of $d$ $n$-bit numbers we consider
$n$ $d$-bits numbers composed of corresponding bits of coordinates.
If we write down $d$ $n$-bit numbers as rows of bit matrix, then the corresponding $n$ $d$-bit numbers in Z-order
become rows of the transposed matrix.

Denote the coordinates of the cell with index~$r$ by~$(a_0,\ldots,a_{d-1})$.
Denote the corresponding Z-order numbers by~$(\alpha^0,\ldots,\alpha^{n-1})$.

Denote~$g_n(k) = (k \mod 2^n)\oplus (\lfloor k/2 \rfloor \mod 2^{n-1})$.
Note that~$g_n$ is a bijection on the set~$\{0,\ldots,2^n-1\}$, so~$g_n^{-1}$ is well-defined on this set.
Denote by~$p(x)$ the \textit{parity} of~$x$, i.\,e. $1$ if the number of odd bits in~$x$ is odd and~$0$ otherwise.

\subsubsection{Encode}
Given dimension~$d$, depth~$n$, numbers $\overline{\alpha}=(\alpha^0,\ldots,\alpha^{n-1})$,
we calculate the index~$r = \mathtt{encode}(d, n, \overline{\alpha})$ as follows.
\begin{itemize}
    \item Put~$r_0 = g_d^{-1}(\alpha_0)$.
    \item Put~$r = \mathtt{encode}(d,n-1,(\alpha^1,\ldots,\alpha^{n-1}))$.
    \item Put~$r' = \mathtt{encode}(d, n-1, (e,\ldots,e\oplus p(e)))$, where~$e = (-1-r_0)\mod 2^d$ (bitwise complement).
    \item Return~$r_0\cdot 2^{d(n-1)} + (r - r' \mod 2^{d(n-1)})$.
\end{itemize}

\subsubsection{Decode}
Given dimension~$d$, depth~$n$, and index~$r$ we
calculate~$\overline{\alpha}=(\alpha^0,\ldots,\alpha^{n-1})$
with the function~$\mathtt{decode}(d, n, r, i=0)$ ($i$ is the argument with the default value~$0$)
as follows.
\begin{itemize}
    \item If~$i \geqslant n$, the function returns~$\overline{\alpha}=(\alpha^0,\ldots,\alpha^{n-1})$.
    \item Put $\alpha_i = g(\rho_i)$, where~$\rho_i = \lfloor r / 2^{d(n-1-i)} \rfloor$.
    \item Put~$r' = \mathtt{encode}(d, n-1, (e,\ldots,e\oplus p(e)))$, where~$e = (-1-r_0)\mod 2^d$ (bitwise complement).
    \item Put $r'' = r - r' \mod 2^{d(n-1)}$.
    \item $\mathtt{decode}(d, n-1, r'', i + 1)$.
\end{itemize}

\subsubsection{Tail recursion}

Here we see that each of \texttt{decode} and \texttt{encode} call two of these functions for smaller~$d$.
But one of these calls is a call to get the index of a corner of an~$(n,d)$-cube or an adjacent cell by the first coordinate.
In practice, we should keep more points than the number of corners of~$(n',d)$-cubes for~$n' < n$.
So they can be precomputed and stored (or lazily evaluated on demand), so the first call will require only~$O(1)$ operations asymptotically.
This improvement makes~\texttt{decode} and~\texttt{encode} tail recursive.

Of course, we can choose initial point other way (for example, put into correspondence the zero index
to the point with zero coordinates), but then we should apply the same additional corrections for mutations.
In the chosen way we always remove the edges with the same indexes.
So, actually, there is no significant difference.

With precomputed corner indexes and implementation of tail recursions as loops on~\texttt{C},
the profiling results of \texttt{encode} and~\texttt{decode} functions for pair~$(n,d)=(7,7)$
for a million calls are the following (see~Table~\ref{tab:profiling}).

\FloatBarrier

\begin{table}
    \begin{tabular}{|l|r|} \hline
    function & average time spent with function descendents, ms/call \\ \hline \hline
    \texttt{encode\_h} & $0.08$ \\
    \texttt{decode\_h} & $0.06$ \\
    \texttt{encode\_Hilbert} & $0.31$ \\
    \texttt{decode\_Hilbert} & $0.47$ \\ \hline
    \end{tabular}
    \caption{Profiling results}
    \label{tab:profiling}
\end{table}

\FloatBarrier

So, we can see that H-curve computes significantly faster than the Hilbert curve.

    \section{Clustering property}
    \label{sec:clustering_property}
        \subsection{Model definition}
\label{subsec:model}

We define and test clustering property following~\cite{908985}.

Let us described the model of experiment.

We assume that data space~$\mathcal{U}$ has dimension~$d$ and finite granularity,
say, a coordinate is an integer $n$-bit number.
So, $U = \{0,1,\ldots,2^n-1\}^d$.
Each point of the space corresponds to a grid cell.
A space-filling curve (below SFC for shortness) introduces a bijection~$\omega\colon U \to \{0,1,\ldots,2^{nd}-1\}$.
A \textit{query} is any subset $q\subseteq \mathcal{U}$.
Consider rectangular queries being intersections of coordinate half-spaces.
More generally (see~\cite{908985}), one can consider queries corresponding to connected simply connected domains.

\begin{remark}
    Here we understand~$\mathcal{U}$ as a subset of the lattice~$\mathcal{Z} = \mathbb{Z}^d$.
    We need some other identification of queries with geometrical objects
    to define connected and simply connected sets correctly.
    Namely, we consider the Euclidean space~$E = \mathcal{U}\otimes_\mathbb{Z}\mathbb{R}$.
    Consider a closed unit cube~$C$ in~$E$.
    It is a fundamental domain of the action~$\mathcal{Z} \lefttorightarrow E$.
    Given a query~$q$, denote by~$C_q$ the set
    \[
        C_q := \bigcup_{p\in q}(p + C) \subset E
    \]
    that consists of shifts of the cube~$C$ by all points of the query.
    We say that a query~$q$ is connected (or simply connected) if so is the interior of~$C_q$.

    For instance, a two point query $q=\{x,y\}$ is connected if and only if $C_q^\circ$ is connected, i.\,e.~$x$~and~$y$
    differ by~$1$ in one coordinate and coincide in all the others.
\end{remark}

\begin{definition}
    A subset~$p\subseteq q$ of a query is called a \textit{cluster} with respect to a SFC~$\omega$
    if it is a maximal subset such that the points (or cells) of~$p$ are numbered consequently by~$\omega$.
    We denote the number of clusters in~$q$ by~$c_q(\omega)$.
\end{definition}

\begin{definition}
    A \textit{clustering property} of a SFC~$\omega$ with respect to a (maybe parametric) class of queries~$\mathcal{Q}$
    as the average number~$c_{\mathcal{Q}}(\omega)$ of clusters
    in~$q\in \mathcal{Q}$ (or the limits/asymptotics of cluster number
    as a function in the parameters if exist).
\end{definition}

Of course, there are also implicit parameters being the space granularity parameter~$n$
and the distribution over~$\mathcal{Q}$.
Usually, for fixed parameters the set~$\mathcal{Q}$ is finite, and the distribution is assumed to be uniform.
If we specify a probabilistic measure~$\mu$ on~$\mathcal{Q}$, then
\[
    c_\mathcal{Q}(\omega) := \int_\mathcal{Q}c_q(\omega)d\mu.
\]

We consider the class of cubic queries~$\mathcal{Q}_\ell$ where~$\ell$ is the side length of cubes.
In~\cite{908985} there were considered parametric classes of queries of same shape parametrized by their scales.
Also, limit asymptotics of average cluster number of a shape (cubes, spheres and some others)
as a function in the scale were considered.


\subsection{Simulation results}
\label{subsec:simulation}

Our main goal is to minimize number of disk accesses.
This number depends on capacity of disk pages, model of memory access,
some particular algorithms of access, insertion and deletion.
We omit the technical details and compute average number of clusters,
or \textit{continuous runs} over a subspace representing a query region.

In~\cite{908985} the analytical results for different curves were tested on different query shapes
and an increasing range of sizes.
Note that the number of different query shapes is exponential in the dimensionality.
Consequently, for a large grid space and high dimensionality, each simulation run may require
an excessively large number of queries.
So we restrict simulations for~$d=2,3,4$.

For a given query shape and size, we do not test all the query positions but perform
a statistical simulation by random sampling of queries.
For query shapes, we choose squares and cubes.
In~\cite{908985} the asymptotic and simulation results we shown to be very close and were considered
as identical from round-off errors.
Also, results coincided for different shapes in simulations and analytic calculation with asymptotics.
So, we consider only quadratic and cubic queries due to reliability of the estimation method.

The results of the experiment are listed in Table~\ref{tab:results}.
For $d=2$ we compare average number of clusters for $10000$ random queries on $1024 \times 1024$ grid
(in~\cite{908985} for~$d=2$ the grid is the same and there were~$200$ queries
for a given combination of shape and size).

\begin{table}
    \begin{tabular}{|c|rrr|}
        \hline
        \multicolumn{4}{|c|}{$d=2$} \\ \hline
        $\ell$ & Z & Hilbert & H \\
        \hline
        2  &   2.62 &  2.00 &  1.99 \\ 
        3  &   4.51 &  3.00 &  3.01 \\ 
        4  &   6.36 &  4.01 &  3.99 \\ 
        5  &   8.25 &  4.99 &  5.00 \\ 
        6  &  10.23 &  6.00 &  6.00 \\ 
        7  &  12.26 &  7.00 &  7.00 \\ 
        8  &  14.23 &  8.03 &  8.00 \\ 
        9  &  16.14 &  9.01 &  9.02 \\ 
        10 &  18.00 &  9.94 &  9.97 \\ 
        11 &  20.04 & 10.98 & 10.98 \\ 
        12 &  22.24 & 12.07 & 12.00 \\ 
        13 &  24.06 & 12.99 & 12.99 \\ 
        14 &  26.04 & 14.00 & 14.00 \\ 
        15 &  28.17 & 15.04 & 15.02 \\ 
        \hline
    \end{tabular}
    \begin{tabular}{|c|rrr|}
        \hline
        \multicolumn{4}{|c|}{$d=3$} \\ \hline
        $\ell$ & Z & Hilbert & H \\
        \hline
        2  &   5.34 &   4.02 &   4.00 \\ 
        3  &  13.51 &   9.04 &   9.01 \\ 
        4  &  25.58 &  16.08 &  16.04 \\ 
        5  &  41.63 &  25.07 &  24.99 \\ 
        6  &  61.62 &  36.10 &  36.03 \\ 
        7  &  85.74 &  49.08 &  49.00 \\ 
        8  & 113.96 &  64.38 &  64.13 \\ 
        9  & 145.76 &  80.90 &  81.00 \\ 
        10 & 181.04 &  99.85 &  99.75 \\ 
        11 & 221.63 & 120.50 & 120.85 \\ 
        12 & 267.50 & 144.72 & 144.77 \\ 
        13 & 314.00 & 169.28 & 169.21 \\ 
        14 & 363.72 & 195.11 & 194.73 \\ 
        15 & 421.75 & 225.17 & 224.99 \\ 
        \hline
    \end{tabular}
    \begin{tabular}{|c|rrr|}
        \hline
        \multicolumn{4}{|c|}{$d=4$} \\ \hline
        $\ell$ & Z & Hilbert & H \\
        \hline
        2  &   10.74 &    7.95 &    8.05 \\ 
        3  &   40.49 &   26.96 &   26.98 \\ 
        4  &  102.33 &   64.39 &   64.14 \\ 
        5  &  208.39 &  125.23 &  125.01 \\ 
        6  &  372.55 &  216.60 &  217.18 \\ 
        7  &  600.43 &  343.52 &  343.02 \\ 
        8  &  911.06 &  513.73 &  512.52 \\ 
        9  & 1312.09 &  730.78 &  729.02 \\ 
        10 & 1810.43 &  991.12 &  995.21 \\ 
        11 & 2440.48 & 1331.96 & 1331.06 \\ 
        12 & 3185.88 & 1734.03 & 1728.66 \\ 
        13 & 4080.00 & 2203.18 & 2197.00  \\ 
        14 & 5091.67 & 2732.45 & 2726.83 \\ 
        15 & 6329.08 & 3378.49 & 3375.01 \\ \hline
    \end{tabular}
    \caption{Average number of clusters in cubic queries with the cube side~$\ell$
        for $d=2,3,4$.}
    \label{tab:results}
\end{table}

\FloatBarrier

    \section{Conclusion}
    \label{sec:conclusion}
    In this paper we introduced a new way to construct cyclic space-filling curves.
    A particular simple family of curves is created (we call them H-curves).
    This family has a very close clustering property to Hilbert curves.
    At the same time, their construction is simpler and significantly faster.
    So, for a number of applications H-curves may be preferable than Hilbert curves.

    \appendix
    \section{Implementation}
    \label{sec:implementation}
        Let us introduce some notation used in pseudocode below:
    \begin{itemize}
        \item $d$ denotes the dimension,
        \item $n$ denotes the depth,
        \item $r$ denotes the number of cube in the traversal,
        \item $\ll_{cycle}$ and $\gg_{cycle}$ denote left and right cyclic bit shifts,
    \end{itemize}

    Implementation of Hilbert curve from~\cite{HPsCd} (rewritten):

    \begin{algorithmic}
        \Function{decode}{$n, d, r$}
            \For{$i\gets [0..d-1]$}
                \State{$J_i \gets \log_2(2\rho_i+1)-1$}
                \State{$\sigma^i \gets \rho^i \oplus \rho^i/2$}
                \If{$\rho^i \% 2$}
                    \State{$\tau_i \gets (\rho^i-1)\oplus(\rho^i-1)/2$}
                \Else
                    \State{$\tau_i \gets (\rho^i-2)\oplus(\rho^i-2)/2$}
                \EndIf
                \State{$\widetilde{\sigma}^i \gets \sigma^i \gg_{cycle} J_0+\ldots+J_{i-1}$}
                \State{$\widetilde{\tau}^i \gets \tau^i \gg_{cycle} J_0+\ldots+J_{i-1}$}
                \State{$\omega^i \gets \text{ if } i=0 \text{ then } 0 \text{ else } \omega^{i-1}\oplus\widetilde{\tau}^{i-1}$}
                \State{$\alpha^i \gets \omega^i \oplus \widetilde{\sigma}^{i-1}$}
            \EndFor
        \EndFunction
    \end{algorithmic}

    \begin{algorithmic}
        \Function{encode}{$n, d, \alpha$}
            \For{$i \gets [0..d-1]$}
                \State{$\omega^i \gets \text{ if } i=0 \text{ then } 0 \text{ else } \omega^{i-1}\oplus\widetilde{\tau}^{i-1}$}
                \State{$\widetilde{\sigma}^i \gets \text{ if } i=0 \text{ then } \alpha_0 \text{ else } \alpha_i\oplus \omega^{i-1}$}
                \State{$\sigma^i \gets \widetilde{\sigma}^i \ll_{cycle} J_1+\ldots+J_{i-1}$}
                \For{$j \gets [0..n]$}
                    \State{$\rho_j^i \gets \text{ if } j=0 \text{ then } \sigma_0^i \text{ else } \sigma_j^i\oplus\sigma_{j-1}^i$}
                \EndFor
                \State{$J_i \gets \log_2(2\rho^i+1)-1$}
                \State{$\tau^i \gets \text{ if odd } parity(\sigma^i) \text{ then } \sigma^i\oplus 2^{n-1} \text{ else } \sigma^i\oplus 2^{n-1}\oplus 2^{J_i}$}
                \State{$\widetilde{\tau}_i \gets \tau_i \ll_{cycle} J_0+\ldots+J_{i-1}$}
            \EndFor
        \EndFunction
    \end{algorithmic}

    Implementation of H-curve:
    \begin{algorithmic}
        \Function{decode}{$n, d, r$}
            \For{$i \gets  [0..n-1]$}
                \State{$\alpha_i \gets g(\lfloor r / 2^{d(n-1)} \rfloor)$}
                \State{
                $r \gets r +
                \mathtt{encode}(
                n-1, d,
                (\underbrace{\overline{\alpha_i},\ldots,\overline{\alpha_i}}_{n-2},\overline{\alpha_i}\oplus p(\overline{\alpha_i}))
                \mod 2^{d(n-1)}
                $}
                \If{$d$ odd and~$n=2$}
                    \State{$r \gets -1-r \mod 2^{2d}$}
                \EndIf
            \EndFor
        \EndFunction
    \end{algorithmic}

    \begin{algorithmic}
        \Function{encode}{$n, d, \overline{\alpha}$}
            \State{$r \gets g^{-1}(\alpha_0)$}
            \State{
            $
            r' \gets \mathtt{encode}(n-1, d, (\alpha_1,\ldots,\alpha_{n-1}))$}\State{$r' \gets r' -
            \mathtt{encode}(
            n-1, d,
            (\underbrace{\overline{\alpha_i},\ldots,\overline{\alpha_i}}_{n-2},\overline{\alpha_i}\oplus p(\overline{\alpha_i}))
            \mod 2^{m(n-1)}
            $
            }
            \If{$d$ odd and~$n=2$}
                \State{$r' \gets -1-r' \mod 2^{2d}$}
            \EndIf\par
            \Return{$r\cdot 2^{d(n-1)} + r'$}
        \EndFunction
    \end{algorithmic}

    \bibliographystyle{unsrt}
    \bibliography{main-eng}

\end{document}